\newtheorem{lemma}{Lemma}
\newtheorem{theorem}{Theorem}
\newtheorem{definition}{Definition}
\newcommand{\eop}{\hspace*{\fill}$\Box$}
\newenvironment{proof}{{\it Proof.}\quad}{\eop\vspace*{4mm}}
\def\To{\Rightarrow}
\def\rank{{\sf rank}}
\def\val{{\sf val}}
\title{XPath Node Selection over Grammar-Compressed Trees}
\author{Sebastian Maneth
\institute{School of Informatics\\University of Edinburgh\\UK}
\email{smaneth@inf.ed.ac.uk}
\and
Tom Sebastian
\institute{Innovimax and\\Links Project (INRIA and LIFL, Lille)\\FR}
\email{tom.sebastian@inria.fr}
}
\begin{document}
\maketitle
\begin{abstract}
XML document markup is highly repetitive and therefore well compressible
using grammar-based compression. 
Downward, navigational XPath can be executed over grammar-compressed 
trees in PTIME: the query is translated into an automaton which is 
executed in one pass over the grammar.
This result is well-known and has been mentioned before.
Here we present precise bounds on the time complexity
of this problem, in terms of big-O notation.
For a given grammar and XPath query, we
consider three different tasks:
(1)~to count the number of nodes selected by the query
(2)~to materialize the pre-order numbers of the selected nodes, and
(3)~to serialize the subtrees at the selected nodes.
\end{abstract}

\section{Introduction}

An XML document represents the serialization of an ordered
node-labeled unranked tree.
These trees are typically highly repetitive with respect to their
internal node labels. This was observed by Buneman, Koch, and 
Grohe when they showed that the minimal DAGs of such trees (where
text and attribute values are removed) have only 10\% of the number
of edges of the trees~\cite{DBLP:conf/vldb/KochBG03}. 
The DAG removes repeating subtrees and
represents each distinct subtree only once. A nice feature of such a 
``factorization'' of repeated substructures, is that many queries can
be evaluated directly on the compressed factored representation, without
prior decompression~\cite{DBLP:conf/vldb/KochBG03,DBLP:conf/lics/FrickGK03}. 
The sharing of repeated subtrees can be
generalized to the sharing of repeated (connected) subgraphs of the tree,
for instance using the sharing graphs of Lamping~\cite{DBLP:conf/popl/Lamping90}, 
or the
straight-line (linear) context-free tree (SLT) grammars of Busatto, Lohrey,
and Maneth~\cite{DBLP:journals/is/BusattoLM08}. 
The recent ``TreeRePair'' compressor~\cite{lohmanmen13}
shrinks the (edge) size of typical XML document trees by a factor of four,
with respect to the minimal unranked DAG (cf. Table~4 in~\cite{lohmanmen13}).

It was shown by Lohrey and Maneth~\cite{DBLP:journals/tcs/LohreyM06} 
that tree automata and 
navigational XPath can be evaluated in PTIME over SLT grammars, without prior
decompression. 
This is used to build a system for selectivity
estimation for XPath by Fisher and Maneth~\cite{DBLP:conf/icde/FisherM07}.
Roughly speaking, the idea is to translate the XPath query into a 
certain tree automaton, and to execute this automaton over the SLT grammar.
In this paper we make these constructions more precise and give 
complexity bounds in terms of big-O notation.
We use the ``selecting tree automata'' of Maneth and Nguyen~\cite{DBLP:journals/pvldb/ManethN10}
(see also~\cite{DBLP:conf/icde/ArroyueloCMMNNSV10}), in their deterministic variant.
Similar variants of selecting tree automata have been
considered in~\cite{DBLP:conf/fsttcs/NeumannS98,DBLP:journals/tcs/NevenS02,DBLP:conf/dbpl/NiehrenPTT05}.
We explain how XPath queries containing the child, descendant, and following-sibling
axes can be translated into our selecting tree automata.
It is achieved via a well-known translation of such XPath queries
into DFAs, due to Green, Gupta, Miklau, Onizuka,
and Suciu~\cite{DBLP:journals/tods/GreenGMOS04}.
We then study three different tasks: (1) to count the number of nodes that
a deterministic top-down selecting tree automaton selects on a tree
represented by a given SLT grammar, 
(2) to materialize the pre-order numbers of the selected nodes, and 
(3) to serialize, in XML syntax, the depth-first left-to-right traversal
of the subtrees rooted at the selected nodes.
The first problem can be solved in $O(|Q||G|)$ where $Q$ is the state set
of the automaton, and $G$ the SLT grammar. 
The second and third problem can be solved in time $O(|Q||G|+r)$ 
and time $O(|Q||G|+s)$, respectively, 
where $r$ is the number of selected nodes and $s$ is the length of
the serialization of the selected subtrees.
Note that the length $s$ can be quadratic in the size of the tree
represented by $G$ (e.g., if every node is selected). 
Thus, $s$ is of length $(2^{|G|})^2$ if $G$ compresses exponentially. 
We show how to obtain a compressed representation of this
serialization by a straight-line string grammar $G'$ 
of size $O(|Q||G|r)$.

Most of the constructions of this paper are
implemented in the ``TinyT'' system. TinyT
and a detailed experimental evaluation is given by
Maneth and Sebastian~\cite{DBLP:journals/corr/abs-1012-5696}.

\section{Preliminaries}

\textbf{XML trees.}\quad
XML defines several different node types, such as 
element, text, attribute, etc.
Here we are only concerned with element nodes.
Our techniques can easily be applied to other types of nodes. 
An \emph{unranked XML tree} 
is a finite node-labeled ordered unranked tree.
The node labels are non-empty words over a fixed finite alphabet 
$\mathbf{U}$. The \emph{first-child next-sibling encoding} of 
an unranked XML tree $t$ is the binary tree obtained from $t$ as follows:
if a node in $t$ has a first child, then this first child becomes the left
child in the binary tree. If a node has a next sibling in $t$, then this
next sibling becomes the right child in the binary tree. If 
a node has no first child (resp. next sibling) then in the binary tree
its left (resp. right) child is a leaf labeled with the special label $\_$.
There is a one-to-one correspondence between
unranked XML trees and binary trees with internal nodes labeled in 
$\mathbf{U}^+$ and leaves labeled $\_$ (and with a root whose right child
 is a $\_$-leaf).
We only deal with such binary trees from now on,
and refer to them as \emph{XML trees}.
Figure~\ref{fig:lib} shows an unranked XML tree on the left (albeit a binary tree itself), 
and is first-child next-sibling encoded tree on the right.

\smallskip

\textbf{Tree grammars.}\quad
A ranked set consists of a set $A$ together with a mapping
$\rank$ which associates the non-negative integer $\rank(a)$
to each $a\in A$.
We fix a special set of symbols $Y=\{y_1, y_2, \dots\}$ 
called \emph{parameters}.

A \emph{straight-line (linear) tree grammar} (for short, \emph{SLT grammar})
is a tuple $G=(N,S,P)$ where $N$ is a finite ranked set of nonterminals, 
$S\in N$ is the start nonterminal of rank $0$, 
and $P$ maps each $A\in N$ of rank $k$ to an ordered finite tree $t$.
In $t$ there is exactly one leaf labeled $y_i$,
for each $1\leq i\leq k$, and the $y_1,\dots,y_k$ appear
in pre-order of $t$.
Nodes in $t$ are labeled by nonterminals in $N$, by 
words in $\mathbf{U}^+$, and by the special leaf symbol $\_$.
If a node is labeled by a nonterminal of rank $k$, then it has 
exactly $k$ children. If a node is labeled by a word in $\mathbf{U}^+$
then it has exactly two children.
If $P(A)=t$ then we also write $A\to t$ 
or $A(y_1,\dots,y_k)\rightarrow t$ and refer to 
this assignment as  a ``production'' or a ``rule''.
We require that the relation $H_G$, called the 
\emph{hierarchical order of $G$}, and defined as
\[
H_G=
\{(A,B)\in N\times N\mid B\text{ occurs in }P(A)\}
\] 
is acyclic and connected.
The grammar $G$ produces exactly one tree, denoted by 
$\val(G)$. It can be obtained by repeatedly
replacing nonterminals $A\in N$ by their definition $P(A)$, starting
with the initial tree $P(S)$. Replacements are done in the obvious
way: a subtree $A(t_1,\dots,t_k)$ is replaced by the tree $P(A)$ in which
$y_i$ is replaced by $t_i$ for $1\leq i\leq k$.
We define the \emph{rank} of the grammar as the maximum of the ranks
of all its nonterminals.
We extend the mapping val to nonterminals $A$ and define $\val(A)$ 
as the tree obtained from $A(y_1,\dots,y_k)$ by applying the rules of $G$
(and treating the $y_i$ as terminal symbols). The tree $\val(A)$
is a binary tree with internal nodes in $\mathbf{U}^+$ and leaves labeled
$\_$ or $y_i$. Each $y_i$ with $1\leq i\leq k$ occurs once, and 
$y_1,\dots,y_k$ occur in pre-order of $\val(A)$.

The \emph{size of an SLT grammar $G$} is defined as the sum of sizes of
the right-hand side trees of all rules. The \emph{size of a tree} is defined
as its number of edges.

\smallskip

\textbf{Example.} Consider the SLT grammar $G_1$ with
three nonterminals $S$, $B$, and $T$, of ranks zero, one, and
zero, respectively. It consists of the following productions:
\[
\begin{array}{lcl}
S      &\rightarrow&\text{lib}(B(B(\_)),\_)\\
B(y_1) &\rightarrow&\text{book}(T,y_1)\\
T      &\rightarrow&\text{title}(\_,\text{author}(\_,\_))
\end{array}
\]
It should be clear that the tree $\val(G_1)$ produced
by this grammar is the 
binary tree shown on the right of Figure~\ref{fig:lib}.
\begin{figure}
\noindent
%\fbox{\input{libraryExample}}
\centering
%\small
%\mbox{
 % \definecolor{Pcol}{rgb}{0,0,0}
 % \definecolor{Ucol}{rgb}{0,0,0}
\gasset{Nframe=n,Nadjust=wh}
\begin{picture}(79,31)(6,0)
  \unitlength=1mm
  \node(root)(21,25){library}
  \node(2)(12,16){book}
  \node(4)(30,16){book}
  \node(5)(7,7){title}
  \node(6)(17,7){author}
  \node(7)(25,7){title}
  \node(8)(35,7){author}
  \drawedge[AHnb=0](root,2){}
  \drawedge[AHnb=0](root,4){} 
  \drawedge[AHnb=0](2,5){} 
  \drawedge[AHnb=0](2,6){} 
  \drawedge[AHnb=0](4,7){}
  \drawedge[AHnb=0](4,8){}

  \node(a)(50,30){library}
  \node(b)(50,24){book}
  \node(c)(50,18){title}
  \node(d)(50,12){\_}
  \node(e)(70,24){\_}
  \node(f)(70,18){book}
  \node(g)(60,12){author}
  \node(h)(60,6){\_}
  \node(i)(65,6){\_}
  \node(j)(70,12){title}
  \node(k)(70,6){\_}
  \node(l)(80,6){author}
  \node(m)(80,0){\_}
  \node(n)(85,0){\_}
  \node(o)(80,12){\_}
  \drawedge[AHnb=0](a,b){}
  \drawedge[AHnb=0](b,c){} 
  \drawedge[AHnb=0](c,d){} 
  \drawedge[AHnb=0](a,e){} 
  \drawedge[AHnb=0](c,g){} 
  \drawedge[AHnb=0](b,f){} 
  \drawedge[AHnb=0](f,j){} 
  \drawedge[AHnb=0](j,k){} 
  \drawedge[AHnb=0](j,l){}  
  \drawedge[AHnb=0](g,h){}  
  \drawedge[AHnb=0](g,i){}  
  \drawedge[AHnb=0](l,m){}  
  \drawedge[AHnb=0](l,n){}  
  \drawedge[AHnb=0](f,o){}

\end{picture}
%}
\caption{\label{fig:lib} Unranked XML tree and its encoded binary tree.}
\end{figure}
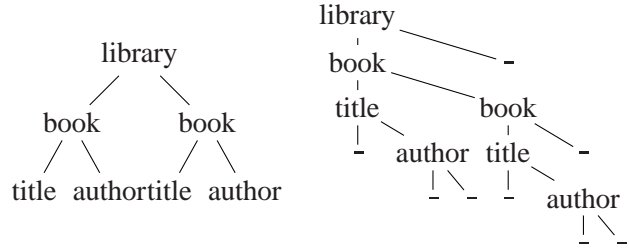

\section{XPath to Automata}
\label{sect:XPath}

We consider XPath queries \emph{without filters}.
In Section~\ref{sect:filters} we explain how filters can be supported.
Such queries are of the form
\[
Q = /a_1::t_1/a_2::t_2/ \cdots /a_n::t_n
\]
where 
$a_i\in\{\text{child}, \text{descendant}, \text{following-sibling}\}$ and
$t_i\in\{*\}\cup\mathbf{U}^+$.
Thus, we support two types of node tests
(i) a local (element) name and
(ii) the wildcard ``*'', and  support three axes:
child, descendant, and following-sibling.
For a query $Q$ and XML tree $t$ we denote by $Q(t)$ the set of
nodes that $Q$ selects on $t$. We do not define this set formally here.

It was shown by Green, Gupta, Miklau, Onizuka,
and Suciu~\cite{DBLP:journals/tods/GreenGMOS04} 
that any XPath query $Q$ containing only the child
and descendant axes can be translated into a deterministic finite state automaton
$\text{DFA}(Q)$. 
Note that their queries and automata also allow to compare
text and attribute values against constants.
The DFA constructed for a given query, is evaluated over the paths of the
unranked XML input tree. When a final state is reached at a node,
then this node is selected by the query.
Their translation is a straightforward extension
of the ``KMP-automata'' for string matching, explained for instance
in the chapter on string matching in~\cite{DBLP:books/mg/CormenLRS01}.
If there are no wildcards in the query, then
Green et al show that the size of the obtained DFA is linear
in the size of the query.
In the presence of wildcards, the DFA size is exponential in the maximal
number of *'s between any two descendant steps
(see Theorem~4.1 of~\cite{DBLP:journals/tods/GreenGMOS04}).
To understand their translation, consider the following example
query:
\[
Q_1=//a/\! *\!/b//c/d
\]
where ``$//$'' denotes the descendant axis (more precisely, it denotes
the query string ``$/\text{descendant}::$''), 
and ``$/$'' denotes the child axis.
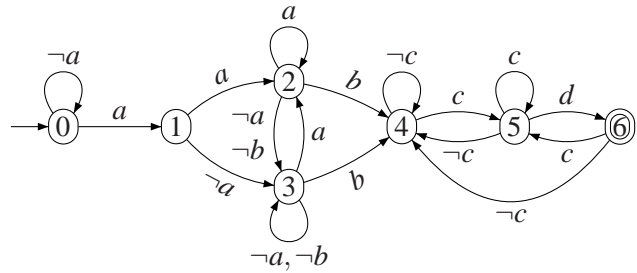
\begin{figure}
%\fbox{
\centering
\gasset{Nframe=y,Nadjust=wh}

\begin{picture}(78,33)(0,0)
  \unitlength=1mm

  \node[Nmarks=i](0)(5,18){$0$}
  \node(1)(20,18){$1$}
  \node(2)(35,24){$2$}
  \node(3)(35,10){$3$}
  \node(4)(50,18){$4$}
  \node(5)(65,18){$5$}
  \node[Nmarks=r](6)(79,18){$6$}
%{\scriptsize\begin{turn}{-45}$\ \ast:B'$\end{turn}}
  \drawedge(0,1){$a$}
  \drawedge[curvedepth=2,ELpos=50,ELdist=0.5](1,2){{\begin{turn}{20}$a$\end{turn}}}
  \drawedge[curvedepth=-2,ELpos=50,ELdist=-0.5,ELside=r](2,3){$\begin{array}{c}\neg a\\\neg b\end{array}$}
  \drawedge[curvedepth=-2,ELpos=50,ELside=r,ELdist=-0.5](1,3){{\begin{turn}{-28}$\neg a$\end{turn}}}
  \drawedge[curvedepth=-2,ELpos=50,ELside=r](3,2){$a$}
  \drawedge[curvedepth=1,ELpos=50,ELdist=0.5](2,4){{\begin{turn}{-20}$b$\end{turn}}}
  \drawedge[curvedepth=-1,ELpos=50,ELside=r,ELdist=0.2](3,4){{\begin{turn}{28}$b$\end{turn}}}
  \drawedge[curvedepth=2,ELpos=50](4,5){$c$}
  \drawedge[curvedepth=2,ELpos=50](5,4){$\neg c$}
  \drawedge[curvedepth=2,ELpos=50](5,6){$d$}
  \drawedge[curvedepth=2,ELpos=50](6,5){$c$}
  \drawedge[curvedepth=10,ELpos=50](6,4){$\neg c$}

  \drawloop[loopdiam=5,loopangle=90](0){$\neg a$}
  \drawloop[loopdiam=5,loopangle=90](2){$a$}
  \drawloop[loopdiam=5,loopangle=-90,ELdist=0.5](3){$\neg a,\neg b$}
  \drawloop[loopdiam=5,loopangle=90](4){$\neg c$}
  \drawloop[loopdiam=5,loopangle=90](5){$c$}

\end{picture}
%}
\caption{DFA for the XPath query $Q_1=//a/\! *\!/b//c/d$.\label{fig:DFA}.}
\end{figure}
The corresponding automaton $\text{DFA}(Q_1)$ is shown in Figure~\ref{fig:DFA}.
For a sequence of children steps, 
the idea is similar to KMP~\cite{DBLP:journals/siamcomp/KnuthMP77}: 
when reading a new symbol that fails, we compute
the longest current postfix (including the failed symbol; this
is the difference to KMP) that is a prefix of the query string and add
a transition to the corresponding state.
Care has to be taken for wildcards, because then (in general) 
we need to remember the symbol read; in the example (at state $1$) 
it suffices to know whether it is an $a$, or not.

\subsection*{Selecting Tree Automata}
\label{sec:DST}

\emph{Selecting tree automata} are like ordinary top-down
tree automata operating over binary trees. 
They use special ``selecting transitions'' to indicate that
the current node should be selected.
In this paper we use deterministic selecting automata.
Similar such nondeterministic automata have been considered 
by Maneth and Nguyen~\cite{DBLP:journals/pvldb/ManethN10}.
Since the XML trees may contain arbitrary labels in $\mathbf{U}^+$,
we require that each state of the automaton
has one \emph{default rule}. The default rule is applied if
no other rule is applicable.

\begin{definition}\label{def:DST}
A \emph{deterministic selecting top-down tree (DST) automaton} is a triple
$\mathcal{A} = (Q, q_0, R)$ where $Q$ is a finite set of states,
$q_0\in Q$ is the initial state, and
$R$ is a finite set of rules.
Each rule is of one of these forms:
\[
\begin{array}{lcl}
(q,w)&\to&(q_1,q_2)\\
(q,w)&\To&(q_1,q_2)
\end{array}
\]
where $q,q_1,q_2\in Q$ and
$w\in\{\%\}\cup\mathbf{U}^+$. The symbol 
$\%$ is a special symbol not in $\mathbf{U}$. 
Let $q\in Q$. We require that
(1) there is exactly one rule in $R$ with left-hand side
$(q,\%)$, called the \emph{default rule of $q$}, and
(2) for any $w\in\mathbf{U}^+$ there is at most one rule in
$R$ with left-hand side $(q,w)$.
\end{definition}

A rule of the first form is called \emph{non-selecting rule} and
of the second \emph{selecting rule}.
The semantics of a DST automaton should be clear.
It starts reading a tree $t$ in its initial state $q_0$ at the root 
node of $t$. In state $q$ at a node $u$ of $t$ labeled $w\in\mathbf{U}^+$
it moves to the left child into state $q_1$ and to the right child into
state $q_2$, if there is a rule 
$(q,w)\beta(q_1,q_2)$ with $\beta\in\{\to,\To\}$.
If $\beta=\To$, i.e., the rule is selecting, then $u$ is a result node.
If $\mathcal{A}$ has no such rule, then the default rule is applied
(in the same way).
The unique run of $\mathcal{A}$ on the tree $t$ 
determines the set $\mathcal{A}(t)$ of result nodes.

Assume we are given an XPath query $Q$ with child and descendant axes
only and consider its translated automaton $\text{DFA}(Q)$.
It is straightforward to translate the DFA into a DST automaton.
If the DFA moves from $q$ to $q'$ upon reading the symbol $a$,
then the DST automaton has the transition $(q,a)\to(q',q)$; this is because the right child
corresponds to the next sibling of the unranked XML tree, and at that node we should
still remain in state $q$ and not proceed to $q'$.
The DST automaton that corresponds to the DFA of Figure~\ref{fig:DFA}
is:
\[
\begin{array}{lcllcl}
(q_0,a)&\to&(q_1,q_0)\quad & (q_4,c)&\to&(q_5,q_4)\\
(q_0,\%)&\to&(q_0,q_0) & (q_4,\%)&\to&(q_4,q_4)\\
(q_1,a)&\to&(q_2,q_1) & (q_5,c)&\to&(q_5,q_5)\\
(q_1,\%)&\to&(q_3,q_1) & (q_5,d)&\To&(q_6,q_5)\\
(q_2,a)&\to&(q_2,q_2) & (q_5,\%)&\to&(q_4,q_5)\\
(q_2,b)&\to&(q_4,q_2) & (q_6,c)&\to&(q_5,q_6)\\
(q_2,\%)&\to&(q_3,q_2) & (q_6,\%)&\to&(q_4,q_6)\\
(q_3,a)&\to&(q_2,q_3)\\
(q_3,b)&\to&(q_4,q_3)\\
(q_3,\%)&\to&(q_3,q_3)\\
\end{array}
\]

Consider now a general XPath query in our fragment, i.e., one that contains
child, descendant, \emph{and} the following-sibling axes.
Consider each maximal sequence of following-sibling steps.
We can transform it to a DFA by simply treating them as descendant steps and
running the translation of Green et al.
The obtained DFA is transformed into a DST automaton by simply carrying
out the recursion on the second child only, i.e., if the DFA moves
form $q$ to $q'$ on input symbol $a$, then the DST automaton has the 
transition $(q,a)\to(\text{dead},q')$, where ``dead'' is a sink state.
We merge the resulting automata in the obvious way to obtain one final
DST automaton for the query. E.g. for XPath query
\[
/a/\text{following-sibling}::b/c
\]
we obtain the following DST automaton:
\[
\begin{array}{lcl}
%(root,\%) &\to& (0,\text{dead})\\
(0,a)     &\to& (\text{dead},1)\\
(0,\%)    &\to& (\text{dead},0)\\
(1,b)     &\to& (2,1)\\
(1,\%)    &\to& (\text{dead},1)\\
(2,c)    & \To & (\text{dead},2)\\ 
(2,\%)    &\to& (\text{dead},2)\\ 
(\text{dead},\%) &\to& (\text{dead},\text{dead})\\
\end{array}
\]

\begin{theorem}
For an XPath query $Q$ we can construct a DST automaton $\mathcal{A}$ such
that $\mathcal{A}(t)=Q(t)$ for every tree $t$.
The size of $\mathcal{A}$ can be bounded according to Theorem~4.1 of~\cite{DBLP:journals/tods/GreenGMOS04}.
In particular, if there are no wildcards, then the size $|\mathcal{A}|$ of $\mathcal{A}$ is in 
$O(|Q|)$.
\end{theorem}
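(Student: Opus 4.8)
The plan is to establish two claims separately: that the unique run of $\mathcal{A}$ on the first-child next-sibling encoding of $t$ selects exactly the nodes in $Q(t)$, and that $|\mathcal{A}|$ differs from $|\text{DFA}(Q)|$ by at most a constant factor, so that the size bound is inherited directly from Theorem~4.1 of~\cite{DBLP:journals/tods/GreenGMOS04}. For correctness I would first treat the case in which $Q$ uses only child and descendant axes, where $\mathcal{A}$ is obtained from $\text{DFA}(Q)$ by the rule $(q,a)\to(q',q)$, turned into a selecting rule exactly when $\delta(q,a)$ is a final state of the DFA. The heart of this case is the invariant that the DST state entering a node $u$ equals the DFA state reached after reading, in the unranked tree, the labels along the path from the root down to the parent of $u$; in particular the root is entered in $q_0$, matching the DFA's initial state. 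I would prove this by induction along the binary tree using the defining property of the encoding: the left child of $u$ is the first child of $u$ in the unranked tree, so its root-to-parent path is that of $u$ extended by the label of $u$, which is exactly why the left component of the rule is $q'=\delta(q,a)$; the right child of $u$ is the next sibling of $u$ and hence shares $u$'s root-to-parent path, which is why the right component retains $q$. Granting the invariant, $u$ is selected iff $\delta(q,\text{label of }u)$ is final, iff the DFA accepts the full root-to-$u$ path, iff $u\in Q(t)$.

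The harder part is the following-sibling axis together with the merging of the per-block automata, which the text describes only informally. For a maximal block of following-sibling steps the match can move only among siblings, that is, along a right-spine of the encoding; running the construction of Green et al.\ on the block (treating its steps as descendant steps) and recursing on the right child only, via $(q,a)\to(\text{dead},q')$, therefore tests the sibling sequence correctly while the $\text{dead}$ sink discards every subtree hanging off a sibling. The supporting invariant is the same as before, but phrased over the sibling chain rather than the ancestor chain. The main obstacle is making ``merge in the obvious way'' precise at the seams between consecutive blocks: upon completing one block the automaton must enter the initial state of the next block in the correct direction --- through the left (first) child when the following step is child or descendant, and through the right (next-sibling) child when the following block is a following-sibling block --- and one must check both that the two forms of the invariant compose at each seam and that the $\text{dead}$ sink together with the default rules absorb every unmatched position, so that no spurious node is selected. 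Carrying out this composition at all seam types is where essentially all of the work lies.

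The size bound is then immediate. The state set of $\mathcal{A}$ is the union of the state sets of the per-block DFAs together with a single $\text{dead}$ sink, and $\mathcal{A}$ carries one rule per DFA transition plus one default rule per state; hence $|\mathcal{A}|=O(|\text{DFA}(Q)|)$. Since the maximal blocks partition $Q$, the total size of these DFAs obeys Theorem~4.1 of~\cite{DBLP:journals/tods/GreenGMOS04}; in particular, in the wildcard-free case each block DFA is linear in the size of its block, and therefore $|\mathcal{A}|\in O(|Q|)$.
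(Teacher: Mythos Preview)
Your proposal is correct and follows exactly the construction the paper describes in Section~\ref{sect:XPath}: the same DFA-to-DST translation $(q,a)\to(q',q)$ for child/descendant blocks, the right-spine variant $(q,a)\to(\text{dead},q')$ for following-sibling blocks, and the per-block size bound from Green et~al. The paper, however, states this theorem \emph{without} a proof --- it simply asserts the result after the informal construction and examples --- so your invariant (DST state at $u$ equals DFA state after the root-to-parent label path) and your analysis of the seams between blocks actually supply the correctness argument that the paper leaves implicit.
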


\section{Automata over SLT Grammars}

This section describes how to perform counting, materialization, and 
serialization for the set of nodes $\mathcal{A}(t)$ selected by the
DST automaton $\mathcal{A}$ on the tree $t=\val(G)$ given by
the SLT grammar $G$. Note that the case of counting was already
described by Fisher and Maneth~\cite{DBLP:conf/icde/FisherM07}; they consider
queries with filters and containing more axes than in our 
fragment (e.g., supporting the following axes), and therefore obtain 
higher complexity bounds (cf. Section~\ref{sect:filters}).

\subsection{Counting}
We build a ``count evaluator'' which executes
in one pass over the grammar, counting the
number of result nodes of the given XPath query. The idea is to
memoize the ``state-behavior'' of each nonterminal of the SLT grammar,
plus the number of nodes it selects.

\begin{theorem}
\label{theorem:count}
Given an SLT grammar $G$ and a DST automaton $\mathcal{A}=(Q,q_0,R)$, 
and assuming that operations on integers of size
$\leq |\val(G)|$ can be carried out in constant time,
we can compute the number $|\mathcal{A}(\val(G))|$ of
nodes selected by $\mathcal{A}$ on $\val(G)$ in time
$O(|Q||G|)$.
\end{theorem}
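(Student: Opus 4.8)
The plan is to process the nonterminals of $G$ in a single bottom-up pass following the hierarchical order $H_G$, computing for each nonterminal a compact summary of how $\mathcal{A}$ behaves on the tree $\val(A)$. Since $\val(A)$ has $k$ parameter-leaves $y_1,\dots,y_k$ (where $k=\rank(A)$), the automaton's behavior on $\val(A)$ depends on the state in which it enters the root. So for each nonterminal $A$ and each state $q\in Q$ I would record two things: first, the \emph{state-behavior}, namely for each parameter $y_i$ the state in which $\mathcal{A}$ arrives at that parameter leaf when started in state $q$ at the root of $\val(A)$ (or a distinguished value indicating that the run does not reach $y_i$ because it was consumed along a non-parameter path); and second, the \emph{count} $c_A(q)$ of result nodes that $\mathcal{A}$ selects strictly within $\val(A)$ (i.e., among the genuine $\mathbf{U}^+$-labeled nodes of $\val(A)$, not counting whatever happens below the parameter leaves) when started in state $q$. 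The key observation making this well-defined is that $\mathcal{A}$ is deterministic and top-down, so a single entry state $q$ determines a unique run on the parameter-free part of $\val(A)$, hence a unique exit state at each $y_i$ and a unique local count.

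First I would handle the base computation of these summaries by structural induction on the right-hand side tree $P(A)$. Traversing $P(A)$ top-down from entry state $q$, at each node I apply the unique applicable rule of $\mathcal{A}$ (the label-specific rule if one exists for the node's word in $\mathbf{U}^+$, otherwise the default rule for $\%$), incrementing the local count whenever a selecting rule $\To$ fires. When the traversal reaches a leaf labeled $y_i$, I record the current state as the exit state for $y_i$; when it reaches a leaf labeled by a nonterminal $B$ (more precisely, an internal occurrence of $B(\dots)$ in $P(A)$), I compose using the already-computed summary of $B$: the current state is the entry state for that occurrence of $B$, so I add $c_B(\text{that state})$ to the running count and continue the traversal into $B$'s children using $B$'s recorded exit states at its parameters. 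Because $H_G$ is acyclic, all summaries for nonterminals occurring in $P(A)$ are available before $A$ is processed. Finally, the answer is read off as $c_S(q_0)$, since $S$ has rank $0$ and $\val(S)=\val(G)$.

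For the complexity, note that each nonterminal $A$ is processed once per entry state $q\in Q$, and processing amounts to a single walk over the right-hand side tree $P(A)$ performing constant work per node (applying one automaton rule, looking up at most one previously computed summary entry, and doing one integer addition). Summing $|P(A)|$ over all $A$ gives $|G|$, and multiplying by the $|Q|$ possible entry states yields the claimed $O(|Q||G|)$ bound, where the integer-arithmetic assumption guarantees that the additions of counts (which may be as large as $|\val(G)|$) cost $O(1)$ each. One subtlety worth stating carefully is the handling of the exit-state information at parameter leaves: a parameter $y_i$ represents an argument subtree that is substituted in, and the run continues below $y_i$ in the recorded exit state, so the count contributed by that argument is attributed to the nonterminal occurrence that supplies it rather than to $A$; this bookkeeping ensures every selected node of $\val(G)$ is counted exactly once across the whole recursion.

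The main obstacle I expect is precisely this parameter-composition accounting: one must argue rigorously that the decomposition of $\val(G)$ induced by the grammar partitions the selected nodes so that each is counted in exactly one nonterminal summary, and that substituting a parameter does not cause a node to be double-counted or dropped. This reduces to a clean invariant stating that $c_A(q)$ counts exactly the selections occurring at the $\mathbf{U}^+$-nodes contributed by $A$'s own right-hand side tree when entered in state $q$, with selections inside argument subtrees deferred to the entry state recorded at the corresponding parameter; verifying this invariant is preserved under the inductive composition step is the crux of the correctness proof.
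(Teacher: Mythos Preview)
Your proposal is correct and follows essentially the same approach as the paper: a bottom-up pass along the hierarchical order computing, for each pair $(A,q)$, the tuple of exit states at the parameters together with the number of selections, composing via already-computed summaries when a nonterminal is encountered in a right-hand side. The paper's mapping $\varphi(A,q)=(q_1,\dots,q_k,n)$ is exactly your state-behavior plus $c_A(q)$; the only superfluous detail in your write-up is the ``distinguished value'' for an unreached parameter, which cannot arise here since a deterministic top-down automaton with default rules visits every leaf of the tree.
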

\begin{proof}
Let $G=(N,S,P)$ and let $H_G$ be its hierarchical order
$H_G=\{(A,B)\in N\times N\mid B\text{ occurs in }P(A)\}$.
We compute a mapping $\varphi$ in one pass
 through the rules of $G$, in reverse order
of $H_G$, i.e., starting with those nonterminals $A$ for which
$P(A)$ does not contain nonterminals.
For each nonterminal $A$ of rank $k$ and state $q\in Q$ we define
$\varphi(A,q)=(q_1,\dots,q_k,n)$ where $q_i\in Q$ and
$n$ is a non-negative integer. The $q_i$ are chosen in such a way
that if we run $\mathcal{A}$ on $P(A)$ then we reach the $y_i$-leaf
in state $q_i$, and $n$ is the number of selected nodes of this run.
We start in state $q$ at the root node of $P(A)$, and set our
result counter for this run to zero.
If we meet a nonterminal $B$ during this run, say, in state $q'$,
then its $\varphi$ value is already defined; thus,
$\varphi(B,q')=(q_1',\dots,q_m',n')$. We continue the run at state
$q_i'$ at the $i$-th child of this nonterminal in $P(A)$.
We also increase our result counter for $q$ and $A$ by $n'$.
If we meet a selected terminal node, then we increase the result
counter by one. The final result count is stored as the number $n$
in the last component of the tuple in $\varphi(A,q)$.
Finally, when we are at the start nonterminal $S$, we compute
its entry $\varphi(S,q_0)=(n)$. This number $n$ is the desired
value $|\mathcal{A}(G)|$. Since we process $|Q|$-times each node of a right-hand
side of the rules of $G$, we obtain the stated time complexity.
\end{proof}

\subsection{Materializing}

Here we want to produce an ordered list of pre-order numbers of those
nodes that are selected by a given DST automaton over
an SLT grammar $G$.
Clearly, this cannot be done in time $O(|Q||G|)$ 
because the list can be of length $|\val(G)|$.

First we produce a new SLT grammar $G'$ that represents
the tree obtained from $\val(G)$ by marking each node
that is selected by the automaton $\mathcal{A}$. 
For each occurrence of a nonterminal $B$ in the right-hand
sides of the rules of $G$, there is at most one new nonterminal
of the form $(q,B,q_1,\dots,q_k)$, where   
$q,q_1,\dots,q_k$ are states of $\mathcal{A}$.
The construction is similar to the proof of Theorem~\ref{theorem:count}:
instead of computing $\varphi(A,q)=(q_1,\dots,q_k,n)$,
we construct a rule of the new grammar $G'$ of the form
$(q,A,q_1,\dots,q_k)\to t$, where $t$ is obtained from $P(A)$ by
replacing every nonterminal $B$ met in state $q'$ by the nonterminal
$(q',B,q_1',\dots, q_m')$ where 
$\varphi(B,q')=(q_1',\dots,q_k',n)$ for some $n$. When during such a run 
a selecting rule of $\mathcal{A}$ is applied to a terminal symbol $a$, then 
we relabel it by $\hat{a}$.
Finally, to be consistent with our definition of SLT grammars (which does
not allow non-reachable (useless) nonterminals because the hierarchical
order is required to be connected), we remove all non-reachable nonterminals
in one run through $G'$.

\begin{lemma}
\label{lemma1}
Let $G$ be a $k$-SLT grammar and $\mathcal{A}$ a DST automaton.
An SLT grammar $G'$ can be constructed in time 
$O(|Q||G|)$ so that $\val(G')$ is the
relabeling of $\val(G)$ according to $\mathcal{A}$.
\end{lemma}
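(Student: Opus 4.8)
The plan is to mirror the single-pass construction already used in the proof of Theorem~\ref{theorem:count}, but instead of merely accumulating counts, to emit an explicit production of $G'$ for each nonterminal-state pair that is actually reachable. I would process the nonterminals of $G$ in reverse hierarchical order (children before parents), so that whenever I enter a nonterminal $B$ in some state $q'$ during the simulated run over a right-hand side $P(A)$, the relevant information about $B$ has already been computed. The key observation, which makes the construction well-defined, is that running the deterministic automaton $\mathcal{A}$ from a fixed state $q$ over the tree $P(A)$ is a \emph{deterministic} process: the states $q_1,\dots,q_k$ reached at the parameter leaves $y_1,\dots,y_k$ are uniquely determined by $q$ and $A$, so the new nonterminal symbol $(q,A,q_1,\dots,q_k)$ and its right-hand side are uniquely determined as well.

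First I would make precise what ``relabeling of $\val(G)$ according to $\mathcal{A}$'' means: it is the tree obtained from $\val(G)$ by replacing the label $a\in\mathbf{U}^+$ of every selected node by a marked copy $\hat a$, leaving everything else unchanged. I would then argue correctness by induction on the hierarchical order. For each pair $(A,q)$ that is reachable, define the production $(q,A,q_1,\dots,q_k)\to t$, where $t$ is the simulated run over $P(A)$ described in the excerpt: each maximal terminal-labeled node met in state $p$ is kept (relabeled to $\hat a$ exactly when the applicable rule of $\mathcal{A}$ is selecting), and each nonterminal $B$ met in state $q'$ is replaced by the symbol $(q',B,q_1',\dots,q_m')$ read off from the already-processed data for $B$. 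The induction hypothesis states that $\val\bigl((q',B,\dots)\bigr)$ equals the relabeling, according to the run of $\mathcal{A}$ started in state $q'$, of the subtree $\val(B)$; substituting these values into $t$ and invoking determinism of $\mathcal{A}$ at the junction nodes shows that $\val\bigl((q,A,\dots)\bigr)$ is the correctly relabeled $\val(A)$. Applied to $(S,q_0)$ this yields $\val(G')=$ the relabeling of $\val(G)$, as required.

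For the timing I would count the total work. There are at most $|Q|$ choices of start state $q$ for each nonterminal $A$, hence the number of productions of $G'$ is at most $|Q||N|\le|Q||G|$. Building the right-hand side for a single pair $(A,q)$ is one traversal of $P(A)$: at each terminal node we look up at most one non-default and one default rule of $\mathcal{A}$ (constant time), and at each nonterminal node we perform one table lookup of the already-computed tuple for $B$. Thus the cost of one pair is $O(|P(A)|)$, and summing over the $|Q|$ start states and all rules gives $\sum_{A}|Q|\cdot O(|P(A)|)=O(|Q||G|)$, matching the claimed bound. The final cleanup pass removing non-reachable nonterminals (needed so that $H_{G'}$ is connected, per the definition of SLT grammars) is a single traversal of $G'$ and therefore also within $O(|Q||G|)$.

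The main obstacle I anticipate is bookkeeping rather than conceptual difficulty: I must argue carefully that the lookup at each nonterminal occurrence genuinely yields a \emph{unique} successor symbol, so that $G'$ is a legitimate SLT grammar in which each new nonterminal $(q,A,q_1,\dots,q_k)$ has a single right-hand side. This rests on the determinism of $\mathcal{A}$ (Definition~\ref{def:DST} guarantees exactly one default rule and at most one rule per $(q,w)$), which ensures the run over $P(A)$ from a fixed $q$ is functional, and on the fact that $\mathcal{A}$ operates on binary trees whose structure matches the first-child/next-sibling encoding, so the parameter-leaf states line up correctly with the substitution of argument subtrees. The secondary subtlety is confirming that the substituted parameters preserve the ranked-tree shape and the pre-order ordering of the $y_i$ demanded by the grammar definition, which follows because substitution does not disturb the relative positions of the parameter leaves.
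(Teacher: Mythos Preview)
Your proposal is correct and follows essentially the same construction the paper sketches in the paragraph preceding the lemma: process nonterminals in reverse hierarchical order as in Theorem~\ref{theorem:count}, emit a rule $(q,A,q_1,\dots,q_k)\to t$ for each reachable pair by simulating $\mathcal{A}$ over $P(A)$, relabel terminals hit by selecting rules as $\hat a$, replace inner nonterminals by their already-computed state-annotated copies, and finish with a cleanup pass for unreachable nonterminals. Your write-up is in fact more explicit than the paper's own treatment, which gives only the construction and no separate correctness or timing argument.
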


Note that in Theorem~5 of~\cite{DBLP:journals/tcs/LohreyM06} it is shown
that membership of the tree $\val(G)$ with respect to a 
deterministic top-down tree automaton (dtta) can be checked in polynomial time.
The idea there is to construct a context-free grammar for the 
``label-paths'' of $\val(G)$; for a tree with root node $a$ and left child leaf $b$,
$a_1b$ is a label path. It then uses the property that the label-path language
of a dtta is effectively regular.

\textbf{Example.} Figure~\ref{fig:lemma1} shows an SLT grammar $G$
with $\val(G)=(aa)^8(e)$ and the DST automaton $\mathcal{A}$ for the XPath query
\[
Q_2=//\! *\![\text{count}(\text{ancestor::\! *\!})\ \text{mod}\ 3=2].
\]
While we do not translate queries using count and ancestor, 
the automaton for this particular query is easy to construct: it uses three states 
$q_1,q_2,q_3$ to count the number of nodes modulo three.
For simplicity the example is on a monadic tree, not an XML tree;
therefore the rules of $\mathcal{A}$ are of the form $q,\%\to q'$, i.e.,
the right-hand side contains only one state instead of two.
The figure also shows the SLT grammar $G'$, representing the relabeling according to
Lemma~\ref{lemma1}. One can verify that $G'$ produces the correct relabeled
tree, by computing $\val(G'):$ 
\[
\begin{array}{l}
(q_1,A_0)\rightarrow\langle q_1,A_1,q_3\rangle(\langle q_3,A_1,q_2\rangle(e))\rightarrow\\
\langle q_1,A_2,q_2\rangle(\langle q_2,A_2,q_3\rangle(\langle q_3,A_2,q_1\rangle(\langle q_1,A_2,q_2\rangle(e))))\rightarrow\\
\langle q_1,A_3,q_3\rangle(\langle q_3,A_3,q_2\rangle(\langle q_2,A_3,q_1\rangle(\langle q_1,A_3,q_3\rangle(\\
\langle q_3,A_3,q_2\rangle(\langle q_2,A_3,q_1\rangle(\langle q_1,A_3,q_3\rangle (\langle q_3,A_3,q_2\rangle(e)...)\rightarrow\\
a(a(\hat a(a(a(\hat a(a(a(\hat a(a(a(\hat a(a(a(\hat a(a(e)...)\\
\end{array}
\]

\begin{figure}
\centering
\begin{tabular}{lcl}
SLT grammar $G$: && DST automaton $\mathcal{A}$:\\
$\begin{array}{l}
A_0\rightarrow A_1(A_1(e))\\
A_1(y)\rightarrow A_2(A_2(y))\\
A_2(y)\rightarrow A_3(A_3(y))\\
A_3(y)\rightarrow a(a(y))
\end{array}$&$\quad$&
$\begin{array}{l}
q_1,\%\to q_2\\
q_2,\%\to q_3\\
q_3,\%\To q_1\\
\end{array}$
\end{tabular}

\begin{tabular}{l}
\\
relabeling SLT grammar $G'$:\\
$\begin{array}{lcl}
\langle q_1,A_0\rangle   & \to & \langle q_1,A_1,q_3\rangle(\langle q_3,A_1,q_2\rangle(e))\\
\langle q_1,A_1,q_3\rangle(y) & \to & \langle q_1,A_2,q_2\rangle(\langle q_2,A_2,q_3\rangle(y))\\
\langle q_1,A_2,q_2\rangle(y) & \to & \langle q_1,A_3,q_3\rangle(\langle q_3,A_3,q_2\rangle(y))\\
\langle q_1,A_3,q_3\rangle(y) & \to & a(a(y))\\
\langle q_3,A_3,q_2\rangle(y) & \to & \hat a(a(y))\\
\langle q_2,A_2,q_3\rangle(y) & \to & \langle q_2,A_3,q_1\rangle(\langle q_1,A_3,q_3\rangle(y))\\
\langle q_2,A_3,q_1\rangle(y) & \to & a(\hat a(y))\\
\langle q_3,A_1,q_2\rangle(y) & \to & \langle q_3,A_2,q_1\rangle(\langle q_1,A_2,q_2\rangle(y))\\
\langle q_3,A_2,q_1\rangle(y) & \to & \langle q_3,A_3,q_2\rangle(\langle q_2,A_3,q_1\rangle(y))\\
\end{array}$
\end{tabular}

\caption{\label{fig:lemma1} A relabeling SLT grammar $G'$ with start
  production $\langle q_1,A_0\rangle$, for a given SLT grammar $G$ with respect 
to a DST automaton $\mathcal{A}$ for query $Q_2$.}
\end{figure}

\begin{theorem}
\label{theo:mat}
Let $G$ be an SLT grammar
and $\mathcal{A}$ be a DST automaton.
Let $r=|\mathcal{A}(\val(G))|$.
We can compute an ordered list of pre-order numbers of the nodes
in $\mathcal{A}(\val(G))$ in time
$O(|Q||G| + r)$.
\end{theorem}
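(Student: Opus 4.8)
The plan is to reduce materialization to a single output-sensitive traversal of the relabeling grammar $G'$ supplied by Lemma~\ref{lemma1}. First I would invoke Lemma~\ref{lemma1} to build, in time $O(|Q||G|)$, the SLT grammar $G'$ whose value is $\val(G)$ with every selected node carrying a hatted label $\hat a$. Since a left-to-right pre-order traversal visits nodes in increasing pre-order number, it then suffices to emit, during such a traversal, the pre-order number of each hatted node: the list produced this way is automatically sorted, so no separate sorting step is needed.

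Next I would run one bottom-up pass over $G'$, in reverse hierarchical order exactly as in the proof of Theorem~\ref{theorem:count}, precomputing for every nonterminal $A$ of rank $k$ three quantities: the number $\mathrm{size}(A)$ of non-parameter nodes of $\val(A)$; for each $1\le i\le k$ the offset $\mathrm{off}_i(A)$, that is, the number of non-parameter nodes preceding the $y_i$-leaf in pre-order of $\val(A)$; and the number $\mathrm{sel}(A)$ of hatted nodes in $\val(A)$. These satisfy $|\val(A)[t_1,\dots,t_k]|=\mathrm{size}(A)+\sum_i|t_i|$, and when the expansion of $A$ begins at pre-order position $p$ they place the $i$-th argument's expansion at position $p+\mathrm{off}_i(A)+\sum_{j<i}|t_j|$. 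This pass costs $O(|G'|)=O(|Q||G|)$.

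With these tables I would enumerate by a recursive descent that carries a current base position $p$ together with, for each parameter in scope, the argument subtree bound to it (within the caller's frame) and the base position of that argument's expansion. Visiting a terminal node emits $p$ if the node is hatted and then recurses into its two children, whose base positions are obtained from $\mathrm{size}$; visiting a parameter continues the descent in its bound argument; visiting a nonterminal installs the argument bindings from $\mathrm{off}$ and $\mathrm{size}$ and recurses into its right-hand side. Crucially, before descending into any subtree I test whether its expansion, under the current bindings, contains a hatted node; this is a constant-time check from the precomputed counts together with the argument counts carried in the environment, and if the test fails I skip the subtree in $O(1)$, merely advancing $p$ by its size.

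The delicate point, and the main obstacle, is to make this traversal cost $O(|Q||G|+r)$ additively rather than $O(|Q||G|\cdot r)$. Pruning empty subtrees is not enough by itself: a reused nonterminal whose expansion is a long hatted-free ``pass-through'' chain ending in a single selected node or a single active parameter could be re-traversed once per selected node, producing a multiplicative blow-up. To prevent this I would contract, already during the bottom-up pass, every maximal unary chain of non-selected nodes, including chains that run through several nonterminals, into a single edge annotated with the skeleton offset it skips; what then remains for each nonterminal is essentially the Steiner tree connecting its root, its hatted skeleton nodes, and its parameters, and the total size of these contracted tables is $O(|G'|)$. Each step of the descent thereafter either outputs a selected node, reaches a genuine branching node, or follows one contracted edge in constant time. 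Charging the branchings and contracted-edge traversals against the selected nodes they separate (a binary tree with $r$ marked leaves has fewer than $r$ branch points) and charging the remaining work against the one-time construction of $G'$ and the tables yields the claimed bound $O(|Q||G|+r)$.
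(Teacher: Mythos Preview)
Your plan follows the paper's closely: both invoke Lemma~\ref{lemma1} to build the marked grammar~$G'$, run one bottom-up pass over~$G'$ to tabulate per-nonterminal sizes and offsets (the paper phrases these as ``chunk'' lengths together with, for each chunk, the list of offsets of hatted terminals lying directly in~$P(A)$), and then traverse top-down while maintaining a running pre-order counter. The paper's second pass is described only as ``a complete pre-order traversal through the grammar~$G'$'' with no discussion of pruning, so the difficulty you isolate is real and is not handled there either: for $A_0\to A_1(\hat e)$, $A_i(y)\to A_{i+1}(A_{i+1}(y))$ ($1\le i<k$), $A_k(y)\to a(y)$, every nonterminal instance's expansion-with-argument contains~$\hat e$, and a descent that merely skips mark-free subtrees still touches $\Theta(2^k)$ instances to report a single node.

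Your fix, however, has its own gap. You assert that after contracting unary non-selected chains ``including chains that run through several nonterminals'' the per-nonterminal Steiner tables total~$O(|G'|)$. But the contracted Steiner tree of $\mathrm{val}(A)$ on its root, hatted nodes, and parameters has $\Theta(\mathrm{rank}(A)+\mathrm{sel}(A))$ vertices, and $\sum_A\mathrm{sel}(A)$ need not be~$O(|G'|)$: with $A_0\to A_1(\_)$, $A_i(y)\to a(A_{i+1}(y),B_i)$ ($1\le i<k$), $A_k(y)\to a(y,\_)$, $B_i\to\hat a(\_,\_)$, one gets $|G'|,r\in\Theta(k)$ yet $\sum_i\mathrm{sel}(A_i)=\Theta(k^2)$. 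Conversely, if you keep the tables uninlined (only referencing sub-nonterminals) their size is~$O(|G'|)$, but then the cross-nonterminal shortcut disappears and the $2^k$ pass-through example above returns. Bridging both regimes---so that long skeleton-only chains are jumped in $O(1)$ while nonterminals with $\mathrm{sel}>0$ are not re-expanded once per output---requires an additional mechanism that neither your sketch nor the paper's proof supplies; as written, the additive $O(|Q||G|+r)$ bound is asserted rather than established.
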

\begin{proof}
Let $G=(N,S,P)$.
By Lemma~\ref{lemma1} we obtain in time $O(|Q||G|)$ an SLT grammar $G'$
whose tree $\val(G')$ is the relabeling of $\val(G)$ with
respect to $\mathcal{A}$. The list of pre-order numbers is constructed
during two passes through the grammar $G'$. First we compute bottom-up 
for each nonterminal $A$ (of rank $k$)
the off-sets of all relabeled nodes that appear in $P(A)$.
An offset is a pair of integers $(c,o)$ where 
$0\leq c\leq k$ is a chunk number, and $o$ is the position of
a node within a chunk.
A \emph{chunk} is the part of the pre-order traversal of $P(A)$ that is
before, between, or after parameters. I.e. when $A$ is of rank $k$, then
there are $k+1$ chunks: the chunk of the traversal from the root of $P(A)$ to 
the first parameter $y_1$ which has
chunk number $0$, the chunks of the traversal between
two parameters $y_i$ and $y_{i+1}$ (with number $i$), and the chunk after the last parameter $y_k$
with number $k$.
We construct a mapping $\varphi$ that maps a nonterminal $A$, a state $q$, and a chunk number $c$
to a pair $(n,L)$ where $n$ is the total number of nodes in the chunk and 
$L$ is the list of off-sets, in order.
We now do a complete pre-order traversal through the grammar $G'$, while
maintaining the current-preorder number $u$ in a counter.
When we meet a nonterminal $A$ in chunk $c$ with a non-empty list $L$ of
off-sets, we add $u$ to each offset and append the resulting list to the output
list.
\end{proof}

\subsection{Serialization}

Here we want to output the XML serialization of the result subtrees rooted
at the result nodes of a query (given by a DST automaton). 
Again, we want the output in pre-order.

\begin{theorem}\label{theo:ser}
Let $G$ be an SLT grammar and $\mathcal{A}$ a DST automaton.
Let $s$ be the sum of sizes of all subtrees rooted at the nodes
in $\mathcal{A}(\val(G))$.
We can output all result subtrees of $\mathcal{A}(\val(G))$ in time 
$O(|Q||G| + s)$.
\end{theorem}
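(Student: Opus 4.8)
The plan is to reduce serialization to the machinery already built for materialization (Theorem~\ref{theo:mat}). First I would apply Lemma~\ref{lemma1} to obtain, in time $O(|Q||G|)$, the relabeling grammar $G'$, so that the selected nodes of $\val(G)$ are exactly the hatted terminals of $\val(G')$. The required output is the concatenation, in pre-order of the selected nodes $u_1,\dots,u_r$, of the plain serializations $\mathrm{ser}(T_{u_1})\cdots\mathrm{ser}(T_{u_r})$, where $T_u$ is the subtree of $\val(G')$ rooted at $u$ (printing the original $\mathbf{U}^+$-labels and ignoring the hats). Since every selected subtree has at least one node, $r\le s$, so it suffices to (a)~enumerate the selected nodes in pre-order within $O(|Q||G|+r)$ and (b)~serialize each $T_{u_i}$ in time $O(|T_{u_i}|)$, which sums to $O(s)$.

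Serializing one subtree is the easy part. A selected node $u$ is an occurrence of a hatted terminal in some right-hand side, and $T_u$ is the subtree below that terminal with the parameters replaced by the arguments supplied along the path from the root of $\val(G')$ to $u$. I would therefore run the enumeration as an actual pre-order walk of $G'$ that maintains a stack of parameter bindings (the arguments of the nonterminals currently entered); when the walk reaches a selected terminal I serialize $T_u$ by a depth-first grammar walk starting at that terminal under the current bindings, emitting one symbol per node of $T_u$, descending into nonterminals (pushing their arguments) and resolving each parameter leaf through the binding stack. Each emitted symbol is charged to $s$, so this costs $O(|T_u|)$. Note that $T_u$ may contain further selected nodes; these are serialized again, independently, when the walk later reaches them, and this repetition is already accounted for by the definition of $s$.

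The main obstacle is precisely the one faced in Theorem~\ref{theo:mat}: the pre-order walk must visit all $r$ selected terminals, yet a non-selected nonterminal may be entered exponentially often, and re-descending its non-selected interior each time would cost up to $O(|Q||G|)$ per selected node, i.e.\ $O(|Q||G|\,r)$ overall. To stay within $O(|Q||G|+r)$ I would reuse the bottom-up memoization of Theorem~\ref{theo:mat}: for each nonterminal $A$ and relevant state, precompute in time $O(|Q||G|)$ a compact list recording, in pre-order, the selected occurrences inside the right-hand side of $A$, where a child nonterminal $B$ contributes a pointer to its own precomputed list rather than its expansion, and where parameter leaves are kept as markers. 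Enumerating the selected nodes then becomes a lazy expansion of these lists, the parameter markers being resolved on entry of a nonterminal by splicing in the lists of its arguments (this is where the binding stack is updated). By the same amortized analysis as in Theorem~\ref{theo:mat} the enumeration runs in $O(|Q||G|+r)$, and combined with the per-subtree serialization this gives the claimed bound $O(|Q||G|+s)$.
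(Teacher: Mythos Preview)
Your proposal is correct and shares the paper's high-level plan: apply Lemma~\ref{lemma1} to obtain $G'$, then combine a bottom-up preprocessing pass on $G'$ with a top-down pass that produces the output. The details of the second pass differ. The paper first writes out a single bracket sequence $S'$ containing the serializations of all \emph{maximal} result subtrees and stores, for every selected node (including the nested ones), a pointer into $S'$; the final output is then produced by following the $r$ pointers and copying the corresponding balanced bracket segments out of $S'$. You instead plug directly into the enumeration machinery of Theorem~\ref{theo:mat} and, at each selected node, perform an independent grammar walk under the current parameter bindings. The paper's $S'$ shares stored text among nested results, so the grammar is unfolded only once per maximal result subtree; you re-unfold overlapping parts, but that extra work is already absorbed by the definition of~$s$. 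Conversely, your reduction is conceptually cleaner and makes the dependence on Theorem~\ref{theo:mat} explicit.

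One minor caveat, glossed over in both arguments: a grammar walk emits one symbol per terminal of $T_u$, but entering a nonterminal and resolving a parameter are non-emitting steps, so the claimed $O(|T_u|)$ bound tacitly assumes that there are no degenerate chain rules such as $A(y_1)\to B(y_1)$; if present, these can be eliminated from $G'$ in linear time beforehand.
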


\begin{proof}
The proof is similar to the proof of Theorem~\ref{theo:mat} in that it runs in two passes over grammar $G'$ whose tree $\val(G')$ is the relabeled one according to Lemma~\ref{lemma1}. During the bottom-up run through the grammar, we construct a mapping $\varphi$ that maps a nonterminal $A$, a state $q$, and a chunk number $c$ to a sequence $S$ of opening and closing brackets of the pre-order traversal corresponding to $A$, $q$, and $c$. Then during the complete pre-order traversal though $G'$ we construct
 a sequence $S'$ of opening and closing brackets containing only result subtrees of $\mathcal{A}(\val(G))$ and pointers to marked elements for nested result nodes. At a nonterminal $A$, in a state $q$, and a chunk $c$ we first start appending to $S'$ if $\varphi(A,q,c)$ contains a marked node. Then when meeting nonterminals $A$, in state $q$, and chunk $c$ inside marked nodes subtrees we always append $\varphi(A,q,c)$ to $S'$, and we store pointers to marked nodes.
Finally, based on the obtained sequence $S'$, the selected subtrees are 
serialized by following the $|\mathcal{A}(\val(G))|$ pointers to their roots in $S'$.
%The proof is similar to the proof of Theorem~\ref{theo:mat}. Instead of off-set lists
%of positions, we store per chunk off-set lists of opening and closing brackets.
%This can be done efficiently, i.e. off-set lists of opening and closing tags of a nonterminal
%$B$ may reference to off-set lists of nonterminals contained in $B$. As before the computation
%of off-sets per chunk is done recursively on the relabeled grammar $G'$ for $G$ with respect 
%to $\mathcal{A}$, and opening and closing tags may be marked as $G'$ contains marked nodes.
%Then when iterating over the start production of $G'$ we store the whole sequence of
%opening and closing tags, which is the unfolded tree of $G'$ in XML syntax, and we store pointers
%to marked element tags. At nonterminals
%we concatenate chunk per chunk off-set lists, and at terminal $a$ we append the XML representation
%$<\!a\!>$. This is illustrated by the following example. Consider the start production 
%$S \to B(C,\hat a,D(b))$. Then the final sequence of opening and closing tags is obtained by 
%concatenation of $B$'s first off-set, $C$'s off-set, $B$'s second offset, $<\!\hat a/\!>$, $B$'s 
%third off-set, $D$'s first off-set, $<\! b/\!>$, $D$'s last off-set, and finally $B$'s last off-set.
%Based on the obtained sequence of opening and closing tags, the final results subtrees can be
%output and copied, by following the $|\mathcal{A}(\val(G))|$ pointers to their roots.
\end{proof}

We can do better, if we are allowed to output a compressed
representation of the concatenation of all result subtrees.
In fact, the result stated in Theorem~\ref{theo:ser}, follows
from Theorem~\ref{theo:bad}.

We can construct a straight-line string grammar (SLP) in time 
$O(|G|)$ that generates the pre-order traversal of the tree
$\val(G)$, see Figure~\ref{fig:lemma2} for an example. 
But, what about an SLP that outputs the concatenation of all pre-order traversals of
the \emph{marked subtrees}?
What is the size of such a grammar?
If every node is marked, and the original tree has $N$
nodes, then the length of the represented string is in
$O(N^2)$.
\begin{figure}[htp]
\begin{tabular}{l}
$\ \ S\ \to\ \langle 113,0\rangle\langle 312,0\rangle\!<\!e\!>\!<\!/e\!>\!\langle 312,1\rangle\langle 113,1\rangle$\\
$\begin{array}{lcl}
\langle 113,0\rangle & \to & \langle 122,0\rangle\langle 223,0\rangle\\
\langle 312,0\rangle & \to & \langle 321,0\rangle\langle 122,0\rangle\\
\langle 312,1\rangle & \to & \langle 122,1\rangle\langle 321,1\rangle\\
\langle 113,1\rangle & \to & \langle 223,1\rangle\langle 122,1\rangle\\
\langle 122,0\rangle & \to & \langle 133,0\rangle\langle 332,0\rangle\\
\langle 122,1\rangle & \to & \langle 332,1\rangle\langle 133,1\rangle\\
\langle 223,0\rangle & \to & \langle 231,0\rangle\langle 133,0\rangle\\
\langle 223,1\rangle & \to & \langle 133,1\rangle\langle 231,1\rangle\\
\langle 321,0\rangle & \to & \langle 332,0\rangle\langle 231,0\rangle\\
\langle 321,1\rangle & \to & \langle 231,1\rangle\langle 332,1\rangle\\
\langle 133,0\rangle & \to & <\!a\!><\!a\!>\\
\langle 133,1\rangle & \to & <\!/a\!><\!/a\!>\\
\langle 332,0\rangle & \to & <\!\hat a\!><\!a\!>\\
\langle 332,1\rangle & \to & <\!/a\!><\!/\hat a\!>\\
\langle 231,0\rangle & \to & <\!a\!><\!\hat a\!>\\
\langle 231,1\rangle & \to & <\!/\hat a\!><\!/a\!>\\
\end{array}$
\end{tabular}
\caption{\label{fig:lemma2} SLP grammar $G''$ for the pre-order traversal 
of $\val(G')$
of Figure~\ref{fig:lemma1}, where $\langle ijk,l\rangle$ is a new nonterminal of $G''$ denoting the pair of a nonterminal $\langle q_i,A_j,q_k\rangle$ of $G'$ and $l$ the number of its chunk.}
\end{figure}

\begin{theorem}\label{theo:bad}
Given an SLT grammar $G$ 
and a subset $R$ of the nodes of $\val(G)$, an SLP $P$ 
for the concatenation of all subtrees
at nodes in $R$ (in pre-order) can be constructed in time
$O(|G||R|)$.
\end{theorem}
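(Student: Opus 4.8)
The plan is to combine the chunk-based SLP for the full pre-order traversal with a per-result-node ``subtree extraction.'' As recalled before Theorem~\ref{theo:bad} (and illustrated in Figure~\ref{fig:lemma2}), one can build in time $O(|G|)$ an SLP whose nonterminals $\angl{A,c}$, one for each nonterminal $A$ of $G$ and each chunk number $0\le c\le\rank(A)$, generate the $c$-th chunk of the pre-order serialization of $\val(A)$; the right-hand side of $\angl{A,c}$ is obtained by walking the part of $P(A)$ belonging to chunk $c$, emitting the opening/closing brackets of terminal nodes and, for each occurrence of a nonterminal $B$, the references $\angl{B,0},\dots,\angl{B,\rank(B)}$ interleaved with $B$'s argument subtrees. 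Since the serialization of the subtree rooted at any node $u$ is exactly the well-balanced contiguous substring of the full traversal running from the opening to the closing bracket of $u$, the whole problem reduces to: (i)~for each node in $R$ construct one ``subtree nonterminal'' $N_u$ generating this substring, and (ii)~add a start nonterminal $P\to N_{u_1}N_{u_2}\cdots N_{u_r}$ whose children are listed in pre-order of $R$. If each $N_u$ can be built with $O(|G|)$ fresh references on top of the shared chunk SLP, the total size and construction time are $O(|G|)+r\cdot O(|G|)+O(r)=O(|G||R|)$, as claimed.

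The heart of the argument is the per-node decomposition. A node $u$ is determined by a terminal-labelled node $v$ in some $P(A_m)$, reached through a chain of nonterminal expansions $A_0=S,A_1,\dots,A_m$, where $A_{j+1}$ occurs in $P(A_j)$. Because $H_G$ is acyclic, this chain is a path in $H_G$, so the $A_j$ are pairwise distinct and $\sum_{j}|P(A_j)|\le|G|$. I would build $N_u$ along this spine: for each level $j$ and each relevant node $w$ of $P(A_j)$ I introduce a nonterminal $D^{(j)}_w$ for the serialization of the subtree of $w$ in the fully expanded tree, whose right-hand side walks the subtree of $w$ inside $P(A_j)$ exactly as a chunk right-hand side does, except that a parameter $y_i$ of $A_j$ is replaced by a reference to $D^{(j-1)}_{w_i}$, where $w_i$ is the $i$-th child of the $A_{j+1}$-occurrence (for $j<m$) that the spine passes through; at level $0$ the rank-$0$ start symbol introduces no parameter, giving the base case, and $N_u=D^{(m)}_v$. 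The subtrees walked at a fixed level are pairwise disjoint (they are distinct children of one occurrence), so the work and the number of references created at level $j$ are $O(|P(A_j)|)$; summing over the distinct spine nonterminals yields $O(|G|)$ per node. Crucially, a nonterminal $B$ met strictly inside a walked subtree is downward closed, so its entire serialization appears and is represented by its full chunk references $\angl{B,0},\dots,\angl{B,\rank(B)}$ --- no chunk ever needs to be truncated, which is what keeps each $N_u$ a plain product of existing chunk nonterminals and the $D$'s.

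It remains to enumerate the nodes of $R$ in pre-order, together with their spines, without traversing the (possibly exponentially large) tree $\val(G)$. As in the proof of Theorem~\ref{theorem:count} I first compute, bottom-up in time $O(|G|)$, the number $n_B$ of marked nodes inside $\val(B)$ for every nonterminal $B$ (this count is context-independent). Then I run a single guided pre-order traversal over the instances of the grammar that maintains the current spine on a stack (pushing when descending into a nonterminal, popping when leaving) and skips in $O(1)$ any occurrence $B$ with $n_B=0$. Whenever a marked terminal node $v$ is reached, the stack supplies the spine and all children $w_i$ needed to emit $D^{(m)}_v$ and append it to $P$. Because every entered instance of a nonterminal $B$ contains at least one marked node, and each marked node lies in at most one instance of $B$, the number of entered instances of $B$ is at most $r$; hence the total traversal cost $\sum_B |P(B)|\cdot(\text{entered instances of }B)$ is bounded by $r\sum_B|P(B)|=O(|G||R|)$.

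The main obstacle is getting each $N_u$ down to $O(|G|)$ rather than $O(|G|^2)$: a naive substring extraction from the chunk SLP would truncate chunk nonterminals and cost one new nonterminal per level of the grammar's height for every result node. The decomposition above avoids truncation precisely because result subtrees are downward closed and because the context filling the parameters is unfolded explicitly along the spine; the two supporting counting facts --- the telescoping $\sum_j|P(A_j)|\le|G|$ from acyclicity of $H_G$, and the bound of at most $r$ entered instances per nonterminal --- are what make both the size and the construction time land at $O(|G||R|)$. One may optionally memoize $D^{(j)}_w$ on the spine to share subtree nonterminals, but this is not needed for the stated bound.
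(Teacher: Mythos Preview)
Your per-node construction is essentially the paper's argument: the paper unfolds $G$ along the path from $S$ to $u$ into a single ``sentential tree'' $t$ of size $O(|G|)$ with the terminal node of $u$ at its root, sets $S_u\to t$, and only then passes to the traversal SLP; your $D^{(j)}_w$ nonterminals are a finer-grained packaging of the same unfolding, and the size bound rests on the same fact that the spine nonterminals are pairwise distinct by acyclicity of $H_G$. (There is an indexing slip: the parameter $y_i$ of $A_j$ is filled by the $i$-th child of the $A_j$-occurrence in $P(A_{j-1})$, not of the $A_{j+1}$-occurrence, and the side condition is $j>0$, not $j<m$; but this does not affect the idea.)

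The genuine gap is in your enumeration step. You assert that the number $n_B$ of marked nodes inside $\val(B)$ is context-independent and computable bottom-up in $O(|G|)$. This is false for the theorem as stated: $R$ is an arbitrary subset of the nodes of $\val(G)$, so two different occurrences of the same nonterminal $B$ can contain different numbers of $R$-nodes, and no single value $n_B$ exists. Your claim would hold for the marking grammar $G'$ of Lemma~\ref{lemma1}, where ``marked'' is a label, but Theorem~\ref{theo:bad} does not assume that input. Without a valid skip criterion your guided traversal cannot be bounded.

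The paper avoids this entirely: it assumes $R$ is given as a list of pre-order numbers, precomputes all chunk lengths in one $O(|G|)$ pass, and then for each $u\in R$ navigates top-down from $S$ using these lengths to locate $u$ and build $S_u$ in $O(|G|)$; concatenating the $S_u$ in sorted order gives the start rule. You can repair your traversal the same way: keep the sorted list of pre-order numbers and the running pre-order offset, and enter a nonterminal instance only if its pre-order interval (known from the chunk lengths) contains the next pending element of $R$. With that fix your argument and the paper's coincide.
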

\begin{proof}
We assume that the nodes in $R$ are given as pre-order numbers.
Let us first observe that for a given SLT grammar $H$, an SLP
grammar of the pre-order traversal of $\val(H)$,
using opening and closing labeled brackets (for instance in XML syntax)
can be constructed in time and space $O(|H|)$, following
the proof of Theorem~3 of~\cite{DBLP:journals/is/BusattoLM08}
(they state $O(|G|k)$ because they count the number of nonterminals
of the SLP).
In one preprocessing pass through $G$ we compute the length
of every chunk of every nonterminal.
Let now $u$ be a pre-order number in $R$. 
Using the information of the chunk lengths, we can determine, starting
at the right-hand side of the start nonterminal, which nonterminal 
generates the node $u$. We keep the respective subtree of the right-hand side,
and continue building a larger sentential tree, until we obtain a sentential form
that has the desired terminal node of $u$ at its root. The obtained sentential
tree $t$ is of size $O(|G|)$. We introduce a new nonterminal $S_u$ with rule
$S_u\to t$. This process is repeated for each node in $R$. Finally we construct
a new start rule which in its right-hand side has the concatenation of all $S_u$'s
with $u\in R$. The size of the resulting grammar is $O(|G||R|)$. Finally, we produce
the SLP for the traversal strings, as mentioned above. 
\end{proof}

Let us consider milder tree compression via DAGs~\cite{DBLP:conf/vldb/KochBG03},
by $0$-SLT grammars that do not use parameters $y_j$.
In this case we can improve the result of
Theorem~\ref{theo:bad} as follows.

\begin{theorem}\label{theo:dag}
Given a $0$-SLT grammar $G$ and a subset $R$ of the nodes
of $\val(G)$, an SLP $G'$ for the concatenation of all subtrees
at nodes in $R$ (in pre-order) can be constructed such that $G'$ is of size
$O(|G|+|R|)$.
\end{theorem}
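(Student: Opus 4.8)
The plan is to exploit the fact that, because a $0$-SLT grammar has no parameters, the subtree rooted at a node $u$ of $\val(G)$ is completely determined by the grammar position at which $u$ is produced, and does \emph{not} depend on the derivation path used to reach that position. This is exactly what fails in the general (parameterized) setting of Theorem~\ref{theo:bad}, where the material substituted for the parameters above $u$ leaks into the subtree below $u$ and forces us to rebuild a fresh sentential tree, of size $O(|G|)$, for every node of $R$. In the DAG case we can instead build a single shared SLP of size $O(|G|)$ out of which every requested subtree is named by one already-present nonterminal, so that $R$ contributes only an additive $O(|R|)$.

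Concretely, first I would attach to every node position $p$ occurring in a right-hand side $P(A)$ a new SLP nonterminal $T_{A,p}$ whose generated string is the pre-order traversal (using opening and closing labeled brackets, treated as single terminal symbols, as in the traversal construction following Theorem~3 of~\cite{DBLP:journals/is/BusattoLM08}) of the fully expanded subtree rooted at $p$. These are defined by the obvious bottom-up rules: if $p$ is an internal node labeled $w$ with left and right children $\ell,r$, put $T_{A,p}\to\langle w\rangle\,T_{A,\ell}\,T_{A,r}\,\langle/w\rangle$; if $p$ is a $\_$-leaf, let $T_{A,p}$ generate the corresponding bracket string; and if $p$ is a leaf labeled by a nonterminal $B$, set $T_{A,p}\to N_B$, where $N_B:=T_{B,\mathrm{root}(P(B))}$ is the traversal of the full expansion of $B$. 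Each such rule has constant right-hand side length, and the number of positions is $\sum_A|P(A)|=O(|G|)$, so this whole collection of rules has size $O(|G|)$. Because the references go strictly toward the leaves inside a fixed $P(A)$, and strictly downward in the (acyclic) hierarchical order $H_G$ when crossing into another nonterminal, the resulting grammar is straight-line.

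Next I would, in a preprocessing pass, record for every position its subtree size (number of nodes), and for each pre-order number $u\in R$ descend from the start nonterminal, using these sizes to decide at each step whether $u$ lies strictly inside the expansion of a nonterminal leaf (in which case I recurse into that nonterminal with an adjusted offset) or coincides with the root of the subtree generated by the current position $p$ in the current right-hand side $P(A)$. In the latter case the subtree rooted at $u$ is, by the context-independence noted above, exactly the string generated by the already-built nonterminal $T_{A,p}$. Finally I would create one start rule whose right-hand side is the concatenation of the nonterminals $T_{A_u,p_u}$ for the nodes $u\in R$ listed in pre-order; this right-hand side has exactly $|R|$ symbols. The total size is $O(|G|)+O(|R|)=O(|G|+|R|)$, as claimed.

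The main obstacle is not the size accounting but justifying the context-independence precisely: one must argue that every node $u$ of $\val(G)$ has a well-defined first grammar position $(A,p)$ at which it is produced, and that the full expansion below $p$ equals the subtree rooted at $u$ regardless of how $A$ was reached. For $0$-SLT grammars this follows from the absence of parameters, since the replacement process never copies anything from an enclosing context into $P(A)$; in the presence of parameters the same $(A,p)$ can yield different subtrees, which is precisely why only the weaker bound of Theorem~\ref{theo:bad} is available there.
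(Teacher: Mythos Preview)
Your proposal is correct and follows essentially the same approach as the paper: the paper first puts $G$ into ``node normal form'' (one terminal per rule), which is exactly your introduction of one nonterminal $T_{A,p}$ per right-hand-side position, and then---like you---builds a start rule that simply concatenates the $|R|$ nonterminals naming the requested subtrees. The paper's presentation is terser (it hides your per-position rules behind the normal-form step and the phrase ``considering $G$ as a string grammar in the obvious way''), but the construction and the $O(|G|)+O(|R|)$ size accounting are the same.
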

\begin{proof}
We first bring the grammar $G$ into ``node normal form''.
This means that the right-hand side of each rule contains
exactly one terminal symbol. 
Note that this may increase the number of nonterminals, 
but does not change the size of the grammar. 
Now, each subtree of $\val(G)$ is represented by a 
unique nonterminal.
The grammar $G'$ is obtained from $G$ by considering $G$ as
a string grammar in the obvious way, and then changing the 
start production such that its right-hand side is the
concatenation (in pre-order) 
of the nonterminals corresponding to nodes in $R$.
\end{proof}
It is easy to extend Theorem~\ref{theo:dag} to slightly more
general compression grammars: the hybrid DAGs of
Lohrey, Maneth, and Noeth~\cite{DBLP:conf/icdt/LohreyMN13}. A hybrid DAG of an unranked tree is
obtained by first building the minimal unranked DAG, then 
constructing its first-child next-sibling encoding (seen as a grammar),
and then building the minimal DAG of this grammar. 
The hybrid DAG of an unranked tree is guaranteed smaller (or equal to)
the minimal unranked DAG and the minimal binary DAG (= DAG of 
first-child next-sibling encoded binary trees).
Theorem~\ref{theo:dag} is extended by bringing the unranked DAG
into node normal form.

\pagebreak

\section{XPath Filters}\label{sect:filters}

An XPath filter (in our fragment)
checks for the existence of a path, starting
at the current node. It is written in the form
$[./p]$ where $p$ is an XPath query as before.
For instance, the query
\[
//b[.//c/d/e][./a/b]/f/g
\]
first selects those $b$-nodes that have somewhere
below the path c/d/e, and which also have an $a$-child that
has a $b$-child. Starting from such $b$-nodes, the query selects
the $f$-children, and then the $g$-children thereof.

It is well-known that such filters can be evaluated using
\emph{deterministic bottom-up tree automata}.
For each filter path $p$ in the query we build one bottom-up
automaton (this construction is very similar to our earlier
construction of DST automata), in time linear to the size of the $p$.
We then build the product automaton $\mathcal{A}$ of all the filter automata. 
The size of this automaton is the product of the sizes of all
filter paths in the query. If we run this automaton over a given
input tree, then it will tell us for each node of the tree,
which filter paths are true at that node.
Thus, for a given SLT grammar~$G$, if we build the intersection grammar
with our bottom-up filter automaton $\mathcal{A}$, then 
the new nonterminals (and terminals) are of the form
\[
(p, A, p_1,\dots, p_m)
\]
where $m$ is the rank of $A$ and $p,p_1,\dots,p_m$ are $n$-tuples of 
filter states. Such a tuple $p$ tells us the states of each filter automaton
and hence the truth value of all the filters.

Given an XPath query with filters, we first build the combined filter
automaton $\mathcal{A}$. We then build for a given SLT grammar $G$, the
bottom-up intersection grammar $G_{\mathcal{A}}$.
We remove the filters from the query and build the
DST automaton $\mathcal{B}$ as before.
However, now we annotate the rules of this automaton, by information
about filters: if at a step of the query that corresponds to state $q$ 
of the $\mathcal{B}$ the filters $f_1,\dots,f_m$ appear in the query,
then the $q$-rule is annotated by these filters; when we evaluate top-down 
we check whether the filters are true, using the annotated information
of the intersection grammar $G_{\mathcal{A}}$.
It is shown in Theorem~1 of~\cite{DBLP:journals/tcs/LohreyM06}
that for a bottom-up automaton and a $k$-SLT grammar, 
the intersection grammar can be produced in time $O(|Q|^{k+1}|G|)$. 

\begin{theorem}\label{theo:filter}
Let $G$ be an SLT grammar
and $\mathcal{A}$ a DST automaton with filter automata
$F_1,\dots,F_n$; the sets of states are $Q, Q_1,\dots, Q_n$, respectively.
Let $r=|\mathcal{A}(\val(G))|$ and $k$ be the rank of $G$. 
We can construct a grammar $G'$ which represents
$\val(G)$ with all result nodes marked,
in time $O(|Q|(|Q_1|\cdots|Q_n|)^{k+1}|G|)$.
\end{theorem}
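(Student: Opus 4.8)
The plan is to assemble the pieces developed in this section into two passes over the grammar: a bottom-up pass that records, at every node, the truth values of all filters, followed by a top-down selecting pass that marks the result nodes exactly as in Lemma~\ref{lemma1}. Concretely, I would first form the product filter automaton $F=F_1\times\cdots\times F_n$, a deterministic bottom-up tree automaton whose state set is $Q_1\times\cdots\times Q_n$ and whose size is $O(|Q_1|\cdots|Q_n|)$; for any tuple of filter states the truth of an individual filter $f_j$ is obtained by testing membership of the $j$-th component in the accepting set of $F_j$. I would then build the bottom-up intersection grammar $G_F$ of $G$ with $F$. By Theorem~1 of~\cite{DBLP:journals/tcs/LohreyM06}, applied to the bottom-up automaton $F$ (with $|Q_1|\cdots|Q_n|$ states) and the $k$-SLT grammar $G$, this grammar is produced in time $O((|Q_1|\cdots|Q_n|)^{k+1}|G|)$ and is itself of that size. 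Its nonterminals have the annotated form $(p,A,p_1,\dots,p_m)$, so that $\val(G_F)$ is $\val(G)$ with every node carrying the filter states reached there; this is the only step responsible for the $(|Q_1|\cdots|Q_n|)^{k+1}$ factor.

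I would then construct, in time $O(|Q|)$ by the DST construction of Section~\ref{sect:XPath}, the automaton $\mathcal{B}$ for the query with its filters stripped, annotating each state $q$ by the list of filters that the query attaches at the corresponding step. Running $\mathcal{B}$ top-down over $G_F$ and producing the marking grammar $G'$ then proceeds exactly as in the proof of Lemma~\ref{lemma1}: for each annotated nonterminal $(p,A,p_1,\dots,p_m)$ of $G_F$ and each state $q\in Q$ we create one rule of $G'$, relabelling a terminal $a$ to $\hat a$ whenever $\mathcal{B}$ selects it. The one new ingredient is that a selecting step of $\mathcal{B}$ fires at a node only if all filters annotated to its state evaluate to true there, which is a constant-time lookup into the filter-state tuple supplied by $G_F$. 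Since we touch each right-hand side of $G_F$ once per state of $\mathcal{B}$, this pass costs $O(|Q|\,|G_F|)=O(|Q|(|Q_1|\cdots|Q_n|)^{k+1}|G|)$, which dominates the total and gives the claimed bound.

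The main obstacle I anticipate is not the complexity bookkeeping, which follows from the cited intersection result and Lemma~\ref{lemma1}, but the correctness argument for interleaving the two modes of evaluation: filters are downward-existential and are naturally resolved bottom-up, whereas node selection is resolved top-down. I would argue that decoupling them is sound precisely because the filter verdict at a node depends only on the subtree below it, so it is fully determined by $F$ and is therefore available, node-locally, in $\val(G_F)$ before the top-down pass begins; consequently the guarded run of $\mathcal{B}$ over $G_F$ marks exactly the nodes in $\mathcal{A}(\val(G))$. A secondary point to check is the correct placement of the filter annotations on the states of $\mathcal{B}$ when the query is split at its axis steps, and the removal of unreachable nonterminals from $G'$ afterwards, as in Lemma~\ref{lemma1}.
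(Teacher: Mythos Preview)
Your proposal is correct and follows essentially the same route as the paper: build the product filter automaton, construct the bottom-up intersection grammar via Theorem~1 of~\cite{DBLP:journals/tcs/LohreyM06} (this contributes the $(|Q_1|\cdots|Q_n|)^{k+1}$ factor), and then run the filter-annotated DST automaton $\mathcal{B}$ top-down over that grammar as in Lemma~\ref{lemma1} to produce the marking grammar. One small refinement: filters in the query may sit at intermediate steps (as in the paper's example $//b[\ldots]/f/g$), so the filter check must gate the corresponding \emph{transition} of $\mathcal{B}$, not only its selecting rules; this does not change your complexity analysis.
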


The complexity stated in Theorem~\ref{theo:filter} is rather
pessimistic and we believe that it can be improved.
We are applying a result about deterministic bottom-up automata
from~\cite{DBLP:journals/tcs/LohreyM06}. We do want to execute
our filter automata bottom-up, but, they are indeed deterministic
top-down automata. In future research we would like to improve
the worst-case complexity stated in the theorem above by taking
this into account.
Consider filters over the child axis only, e.g., $[./a/b/c]$.
Instead of using a bottom-up automaton for the filter and 
constructing an intersection grammar according to~\cite{DBLP:journals/tcs/LohreyM06}
in time $O(|Q|^{k+1}|G|)$, we use a top-down automaton for the
``relative'' query $./a/b/c$; it can be constructed similar as
our DST automata. Via Lemma~\ref{lemma1} we obtain a marking grammar $G'$
in time $O(|Q||G|)$. We now want to transform this grammar so that instead
of the $c$-nodes, their grandparent $a$-nodes are marked.
How expensive is this transformation?
It seems n the worst case that each occurrence of a nonterminal in $G'$ must
be changed into a distinct copy (and recursively for the new right-hand
sides). This would run in time $O(|G'|^2)$. Can it be improved?
How can be handle other axes such as descendant? In which cases is
this solution more efficient than the one of Theorem~\ref{theo:filter}?

%\nocite{*}
\bibliographystyle{eptcs}
\bibliography{lib}
\end{document}